\documentclass[aps,prl,reprint,superscriptaddress]{revtex4-2}

\usepackage{amsmath,amssymb,mathtools,amsthm}
\usepackage{graphicx}

\newcommand{\ii}{\mathrm i}
\newcommand{\dd}{\mathrm d}
\newcommand{\Tr}{\operatorname{Tr}}
\newcommand{\Var}{\operatorname{Var}}

\newcommand{\Ex}{\mathbb E}
\newcommand{\norm}[1]{\lVert#1\rVert}
\newcommand{\ket}[1]{\lvert#1\rangle}
\newcommand{\bra}[1]{\langle#1\rvert}

\newtheorem{theorem}{Theorem}
\newtheorem{proposition}[theorem]{Proposition}

\theoremstyle{remark}

\makeatletter
\@ifundefined{frontmatter@RRAPformat}{}{%
  \renewcommand*\frontmatter@RRAPformat[1]{}}
\makeatother

\begin{document}
\title{Laughlin quasihole geometry from a single snapshot ensemble}

\author{Kaushlendra Kumar}
\affiliation{School of Mathematical Sciences, Queen Mary University of London, Mile End Road, London E1 4NS, United Kingdom\\
\textup{kaushlendra.kumar@qmul.ac.uk}}

\date{}

\begin{abstract}
Can a probability distribution measured in one basis determine complex quantum geometry? The answer is affirmative for a lattice Laughlin quasihole. The exact occupation law at one generic reference position, together with the known analytic quasihole factor, fixes the complete complex Gram kernel.  A finite snapshot ensemble estimates this kernel without preparing another member of the family, and thereby determines finite-distance overlaps, Bargmann phases, the quantum metric, and the Berry curvature. The same analytic structure confines the family to an exact projective subspace whose dimension grows at most linearly with particle number. Exact enumeration of a Nielsen--Cirac--Sierra state demonstrates finite-shot reconstruction of both the metric and a geometric phase. Moreover, a R\'enyi-2 divergence sets the statistical range of the reconstruction while, at nondegenerate points, occupation readout also attains the local single-copy multiparameter information bound.
\end{abstract}

\maketitle

The quantum metric and Berry curvature are relational quantities.  They describe, respectively, how a ray changes as a parameter moves and the phase accumulated around a loop.  Standard routes therefore compare nearby preparations, interfere distinct states, or drive the system and measure its response \cite{ProvostVallee1980,KolodrubetzEtAl2017}.  An occupation snapshot contains less obvious information: it is one projective measurement in one basis, recording only where the particles were found.

Quantum-gas microscopes now resolve occupation patterns shot by shot, and a bosonic lattice Laughlin state has been prepared with ultracold atoms \cite{LeonardEtAl2023}.  Recent work has also shown that one bulk snapshot data set can reveal observables of defects that were never physically introduced \cite{SarmaEtAl2026}.  These developments make a sharper question experimentally meaningful.  Can measurements at one quasihole position determine the \emph{complex} relations among states that were never prepared?

For a Laughlin quasihole, the answer is yes.  Moving the quasihole changes each occupation amplitude by a \emph{known complex multiplier}.  Averaging products of these multipliers against the measured reference distribution cancels the unknown reference phases and yields the full overlap kernel.  The local quantum geometric tensor (QGT) is recovered as its coincident-point derivative.

Place \(N_s\) lattice sites at arbitrary points \(z_i\in\mathbb C\) and consider \(N_p\) hard-core particles.  A quasihole of integer strength \(p\) is localized at the externally controlled parameter \(w\in\mathbb C\).  The coordinate \(w\) labels the state and is not a dynamical particle coordinate.  For the lattice Laughlin states \cite{NielsenCiracSierra2012,GlasserEtAl2016}, the amplitude of an occupation configuration \(\mathbf{n}=(n_1,\ldots,n_{N_s})\), \(n_i\in\{0,1\}\), \(\sum_i n_i=N_p\), factorizes as
\begin{align}
 f_\mathbf{n}(w)&:=\prod_i(w-z_i)^{p n_i},\qquad
 \Phi_\mathbf{n}(w)=\Phi_0(\mathbf{n})f_\mathbf{n}(w),\nonumber\\
 X_\mathbf{n}(w)&:=\partial_w\log f_\mathbf{n}(w)
 =p\sum_i\frac{n_i}{w-z_i}.
 \label{eq:family}
\end{align}
All Laughlin, lattice, and gauge factors that do not move with the quasihole are collected in \(\Phi_0\). No translation symmetry or regular lattice is needed.  The factorization places no constraint on the site geometry, and in fact lattice Laughlin states have been studied on irregular and fractal supports \cite{MannaEtAl2020}. We also use occupation-basis expansion of the quasihole state \(\ket{\Phi(w)}=\sum_\mathbf{n}\Phi_\mathbf{n}(w)\ket{\mathbf{n}}\), with normalisation \(Z(w)=\langle\Phi(w)|\Phi(w)\rangle\), and normalised state \(\ket{\psi(w)}=Z(w)^{-1/2}\ket{\Phi(w)}\) in the following.

The protocol requires only the reference state at a generic \(w_0\). Repeated occupation measurements sample the probability distribution $P_0({\bf n})$ and corresponding expectations for any observable $A$,
\begin{equation}
 P_0(\mathbf{n})=\frac{|\Phi_\mathbf{n}(w_0)|^2}{Z(w_0)},
 \qquad
 \Ex_0[A]=\sum_\mathbf{n}P_0(\mathbf{n})A(\mathbf{n}).
 \label{eq:reference}
\end{equation}
For every recorded configuration and every target position \(w\), the ratio \(r_\mathbf{n}(w)=f_\mathbf{n}(w)/f_\mathbf{n}(w_0)\) is known.  A single expectation therefore gives
\begin{align}
 \mathcal K(w,v)&:=\Ex_0\!\left[
 \overline{r_\mathbf{n}(w)}r_\mathbf{n}(v)\right],
 &\mathcal K(w,w)&=\frac{Z(w)}{Z(w_0)},\nonumber\\
 \langle\psi(w)|\psi(v)\rangle&=
 \frac{\mathcal K(w,v)}
 {\sqrt{\mathcal K(w,w)\mathcal K(v,v)}}.
 \label{eq:kernel}
\end{align}
Equation~\eqref{eq:kernel} is the central result of this Letter. The measured distribution fixes the statistical weights, while the known amplitude ratios supply their relative complex phases.  Their average therefore yields overlaps between unprepared states without a coherent comparison of separately prepared family members.  Unlike ordinary correlated sampling, which reweights diagonal expectation values \cite{FoulkesEtAl2001}, \eqref{eq:kernel} retains the complex bra--ket ratio and reconstructs off-diagonal overlaps and Bargmann phases.  Coherent cycle tests can obtain the same relational data from several prepared states \cite{OszmaniecBrodGalvao2024}.  Here the analytic quasihole structure provides a separate route for such estimations.

Configuration-dependent imperfections that are independent of \(w\) are absorbed at the reference point: their moduli enter \(P_0\), while their phases cancel.  Only violations of the known amplitude ratio change the reconstructed geometry.  The Supplemental Material gives an explicit norm bound on the resulting kernel error \cite{Supplemental}.  Readout errors are different in kind as they act on the recorded configuration rather than on the amplitude, so they replace \(P_0\) by a stochastic image and are not absorbed. An independently calibrated readout channel can be inverted before the average is taken.  Two reference positions give a check that needs no such calibration, because their reconstructions must agree wherever both remain statistically accessible.

\begin{figure*}[t!]
 \centering
 \includegraphics[width=\textwidth]{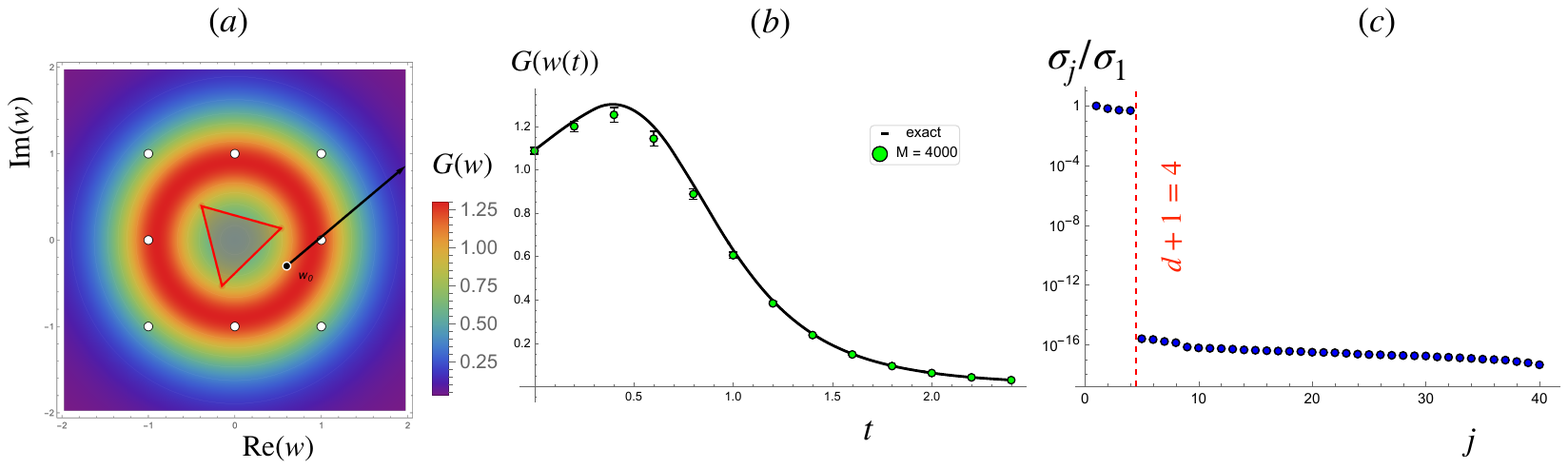}
 \caption{Single-reference reconstruction of quasihole geometry for an exactly enumerated \(q=2\) Nielsen--Cirac--Sierra state with \(N_s=8\), \(N_p=3\), and \(p=1\). (a) Exact scalar metric coefficient \(G(w)\).  White circles mark the lattice sites, the black point marks the prepared reference \(w_0\), the black ray is the reconstruction path, and the red triangle marks the three positions used for the Bargmann-phase test. (b) Reconstruction of \(G(w(t))\) along the ray with green points showing one \(M=4\times10^3\) realization, error bars giving one standard deviation over \(160\) independently generated reference ensembles while the black curve is exact. (c) Normalized singular values of the state matrix at \(40\) quasihole positions, showing exactly four nonzero values.}
 \label{fig:reconstruction}
\end{figure*}

The kernel also determines finite-separation geometry.  Products of reconstructed overlaps around a polygon are gauge-invariant Bargmann invariants \cite{Bargmann1964}.  Their phases approach the Berry holonomy as the polygon resolves a loop \cite{AvdoshkinPopov2023}.

The same kernel fixes a distance that is geometric rather than merely statistical. For the density matrix \(\rho_w=\ket{\psi(w)}\!\bra{\psi(w)}\), a finite spectral triple assigns to its state space the Connes distance, obtained as a supremum of state differences over the Lipschitz ball of a Dirac operator \cite{Connes1994}. For the scalar-anchored triple introduced in Appendix~C of \cite{Kumar2026anchor}, extended here to the many-body matrix sector, that supremum has a closed form,
\begin{equation}
 d_\Lambda(\rho_w,\rho_v)=\frac{2}{\Lambda}
 \sqrt{1-\frac{|\mathcal K(w,v)|^2}
 {\mathcal K(w,w)\mathcal K(v,v)}}.
 \label{eq:chord}
\end{equation}
The resulting consequences are threefold. The distance is derived from a Dirac operator rather than postulated, so the quasihole family carries a spectral geometry and not only a fidelity.  The scalar \(\Lambda>0\) is the only free parameter of that geometry, so one calibrated reference distance fixes every other distance in the family.  That geometry is in turn tied to a calibration-independent operational quantity, since \(P_{\rm succ}^{\rm opt}=\tfrac12+\tfrac{\Lambda}{4}d_\Lambda\) \cite{Helstrom1976} returns the equal-prior Helstrom probability, in which \(\Lambda\) cancels.  The scalar therefore sets the spectral units, and the same relation converts them into distinguishability. Finally, Connes distances on finite geometries are ordinarily computed from a known algebra.  Equation~\eqref{eq:chord} instead makes one accessible to measurement, since the right-hand side is built from reconstructed overlaps. The distance calibrates distinguishability but is not itself a probability.

The same analytic structure strongly constrains the family's projective span.  Since \(f_\mathbf{n}(w)\) has degree \(d=pN_p\),
\begin{equation}
 \ket{\Phi(w)}=\sum_{k=0}^{d}w^k\ket{\chi_k},
 \qquad
 \ket{\chi_k}=\sum_\mathbf{n}\Phi_0(\mathbf{n})a_k(\mathbf{n})\ket{\mathbf{n}}.
 \label{eq:compression}
\end{equation}
The positive moment matrix \(H_{kl}=\langle\chi_k|\chi_l\rangle\) stores the kernel: \(\mathcal K(w,v)=Z(w_0)^{-1}m_d(w)^\dagger Hm_d(v)\), where \(m_d(w)=(1,w,\ldots,w^d)^\top\).  If \(\operatorname{rank}H=r+1\), the minimal projective ambient space is \(\mathbb{CP}^r\), with \(r\le pN_p\).  Provided \(r\ge1\), the moving quasihole traces a one-complex-dimensional polynomial curve of degree at most \(d\).  Its projective span is independent of the ambient many-body dimension.  Rank one instead gives a constant projective ray.  In Fig.~\ref{fig:reconstruction}(c), the 56-component many-body state curve has projective span \(\mathbb{CP}^3\).

The local geometry follows by bringing the two kernel arguments together.  From \(\mathcal K(w,w)=Z(w)/Z(w_0)\),
\begin{align}
 G(w)&:=\partial_w\partial_{\bar w}\log\mathcal K(w,w)
 =\Var_w X,\nonumber\\
 \mathcal Q_{ab}&:=\langle D_a\psi|D_b\psi\rangle
 =G\begin{pmatrix}1&\ii\\-\ii&1\end{pmatrix}_{ab},
 \label{eq:QGT}
\end{align}
where \(\ket{D_a\psi}:=(1-\rho_w)\partial_a\ket\psi\).  It follows that the Fubini--Study metric on the \emph{quasihole-position parameter space} is \(g^{\rm FS}_{ab}=G\delta_{ab}\), while our Berry-curvature convention \cite{Berry1984} gives \(\Omega_{xy}=-2G\).  Note that this is distinct from the guiding-centre metric of the Hall fluid \cite{Haldane2011}.  The two components are locked together by that relation, so \(G\) fixes the local geometric-phase density.  Transporting the quasihole around the boundary of a region \(\Sigma\) accumulates \(2\int_\Sigma G\) in the present convention.  The same coefficient sets how precisely the quasihole can be located, since the quantum Fisher matrix below is \(4G\mathbf1_2\). Although log-derivative covariance formulas are standard in variational Monte Carlo \cite{Sorella2005}, our~\eqref{eq:kernel} extends these local covariances to finite-separation overlaps and phases reconstructed from one experimentally sampled law.

For a finite reference sample the accessible range is set by the fluctuations of the importance weights.  We have \(P_w=\mathcal R_wP_0\) with \(\mathcal R_w=|r_\mathbf{n}(w)|^2/\Ex_0|r_\mathbf{n}(w)|^2\), so that reconstructing at \(w\) is a reweighting problem \cite{FoulkesEtAl2001}, and as \(w\) moves away from \(w_0\) fewer reference shots carry appreciable weight,
\begin{equation}
 \frac{M_{\rm eff}}{M}\longrightarrow
 e^{-\mathsf D_2(P_w\Vert P_0)}.
 \label{eq:horizon}
\end{equation}
The R\'enyi-2 divergence $D_2(P_w\Vert P_0)$ therefore limits how far one reference ensemble reaches, which is a sampling limit and not a boundary of the kernel identity itself.  Where \(\mathsf D_2\) grows large we simply cover the region with several overlapping reference ensembles, at a cost set by distributional overlap rather than by the ambient Hilbert-space dimension.  Locally \(\mathsf D_2(P_{\lambda+\dd\lambda}\Vert P_\lambda)=F^{\rm occ}_{ab}\dd\lambda^a\dd\lambda^b+O(|\dd\lambda|^3)\) for \(\lambda=(x,y)\), where \(F^{\rm occ}\) is the classical Fisher matrix of the occupation probabilities, so the same occupation Fisher geometry controls that reach.

Occupation readout also has a precise local optimality property.  The two real parameters \((x,y)\) span one complex tangent because \(\ket{D_y\psi}=\ii\ket{D_x\psi}\).  Every local measurement problem can therefore be compressed, without changing outcome probabilities or their first derivatives, to \(\operatorname{span}_{\mathbb C}\{\ket\psi,\ket{D_x\psi}\}\).  The Gill--Massar bound is consequently the qubit bound \(\Tr[\mathcal I^{-1}F^{\rm cl}]\le1\), not a bound set by the \(56\)-dimensional many-body host \cite{GillMassar2000,WangChenYuan2026PRL}. Since the pure-state quantum Fisher matrix is \(\mathcal I=4G\mathbf1_2\) \cite{Paris2009} and \(\operatorname{tr}F^{\rm occ}=4G\),
\begin{equation}
 \Tr\!\left[\mathcal I^{-1}F^{\rm occ}\right]=1.
 \label{eq:optimal}
\end{equation}
In other words, occupation snapshots saturate the normalized joint-information trade-off, and this is not generic.  It holds because the occupation scores are exactly \(2\operatorname{Re}\Delta X\) and \(-2\operatorname{Im}\Delta X\), which follows from the same holomorphic factorization that gives \eqref{eq:kernel}, so the same holomorphic structure that makes the kernel reconstructible also makes occupation readout saturate the bound.  Note that the statement is local and multiparameter: it does not say \(F^{\rm occ}=\mathcal I\), since such an equality would give \(\Tr[\mathcal I^{-1}F^{\rm occ}]=2\) and violate the unit bound, nor that occupation readout is Helstrom optimal for two finitely separated states \cite{Helstrom1976}.

The occupation law at one reference position, together with the known quasihole factor, determines the relational geometry of the entire factorized family.  It reconstructs finite-distance overlaps and phases, the exact projective span, and the local metric and curvature without preparing the remaining states.  The construction is model assisted, in that it applies whenever the configuration-wise quasihole factor is known rather than to an arbitrary Coulomb quasihole.  Within that class, a previously recorded ensemble can replace a sequence of state preparations, and the same principle extends to any state family with known configuration-wise parameter dependence.

\textit{Data availability.---} The Mathematica notebook reproducing the benchmark of Fig.~\ref{fig:reconstruction} has been submitted to the Wolfram Notebook Archive \cite{NotebookArchive}.

\begin{acknowledgments}
\textit{Acknowledgments.---}The author is grateful to the DFG for support through a Walter Benjamin Fellowship, project number 515782239.
\end{acknowledgments}

\clearpage
\onecolumngrid

\begin{center}
\textbf{\large Supplemental Material}\\[4pt]
\textbf{Laughlin quasihole geometry from a single snapshot ensemble}\\[2pt]
Kaushlendra Kumar\\[2pt]
\textit{School of Mathematical Sciences, Queen Mary University of London,
Mile End Road, London E1 4NS, United Kingdom}
\end{center}

\setcounter{equation}{0}
\setcounter{section}{0}
\setcounter{theorem}{0}
\renewcommand{\theequation}{S\arabic{equation}}
\renewcommand{\thesection}{S\Roman{section}}

This Supplemental Material derives the single-reference kernel identity and its geometric and statistical consequences in order, followed by illustration of the Nielsen--Cirac--Sierra (NCS) calculation behind Fig.~1 of the Letter. Everything rests on one structural fact: the quasihole position enters each occupation amplitude through a known multiplicative factor. The consequences are algebraic, with the single exception of the finite-sampling analysis, and two of them are quantities normally obtained by quite different means. In contrast, here the Berry curvature is recovered without transporting the state in time, while the quantum metric is obtained without diagonalizing any Hamiltonian.

\section{Single-reference reconstruction}
\label{sm:sec:setup}

We start with a lattice with fixed \(N_s\) sites at distinct points \(z_i\in\mathbb C\).  A configuration \(\mathbf{n}=(n_1,\ldots,n_{N_s})\) records which of them are occupied, with binary values \(n_i\in\{0,1\}\) leading to a fixed total \(\sum_i n_i=N_p\). The corresponding orthonormal states \(\ket{\mathbf{n}}\) span the many-body space \(\mathcal H_{\rm mb}\).  For the eight-site benchmark of Fig.~1 there are \(\binom83=56\) configurations, so that \(\mathcal H_{\rm mb}\simeq\mathbb C^{56}\), and this basis is held fixed from here on. A quasihole of positive integer strength \(p\) sits at a position \(w=x+\ii y\) that the experimenter controls from outside, and each choice of \(w\) produces one many-body state expanded in that same fixed basis.  The collection of states obtained as \(w\) varies is what we call the quasihole family.  The configurations and the family are different things.  The \(56\) configurations label the basis vectors of \(\mathcal H_{\rm mb}\) and never change, whereas each value of \(w\) picks out one particular vector expanded in that basis.  In this construction \(w\) labels states and is not a further particle coordinate.

For the lattice Laughlin states~\cite{NielsenCiracSierra2012,GlasserEtAl2016}, all \(w\)-dependence factorizes:
\begin{align}
 f_\mathbf{n}(w)&:=\prod_i(w-z_i)^{p n_i},\qquad
 \Phi_\mathbf{n}(w)=\Phi_0(\mathbf{n})f_\mathbf{n}(w),
 \label{sm:eq:family}\\
 \ket{\Phi(w)}&:=\sum_\mathbf{n}\Phi_\mathbf{n}(w)\ket{\mathbf{n}},\qquad
 \ket{\psi(w)}:=\frac{\ket{\Phi(w)}}{\sqrt{Z(w)}},\quad
 Z(w):=\langle\Phi(w)|\Phi(w)\rangle .
 \label{sm:eq:states}
\end{align}
The remaining factor \(\Phi_0(\mathbf{n})\) collects every Laughlin, lattice, background-charge, and gauge contribution that does not move with the quasihole, and is therefore the same for all \(w\).  Since the argument uses only the numerical values of the site positions, those sites neither require a special form such as a periodic array, a Bravais lattice nor do they need to have translation symmetry or a momentum-space description.  For a single quasihole, nonzero single-valued factors that do not depend on the configuration \(\mathbf{n}\) amount to a choice of projective representative and are omitted throughout, while possible multivalued several-quasihole monodromy factors \cite{GlasserEtAl2016} lie outside the present scope.

Beyond the finiteness of the configuration set, which for fixed \(N_s\) and \(N_p\) is automatic, the reconstruction needs one piece of model input and three regularity conditions. The model input is that \(f_\mathbf{n}(w)\) can be evaluated for every recorded configuration at every target position, which is ensured by knowing the quasihole family that determines all the factors \((w-z_i)^p\). Next, the reference position must avoid the sites, \(w_0\neq z_i\), because otherwise \(f_\mathbf{n}(w_0)\) vanishes for every configuration that occupies that site and the ratio \(r_\mathbf{n}(w)\) defined below has no meaning there.  Another requirement is that the normalized overlaps should obey \(Z(w)>0\), since a state of vanishing norm determines no ray to normalize.  Lastly, all states are expanded in the one fixed occupation basis, so that the kernel identity below can be properly applied to compare the amplitude of a given configuration at \(w_0\) with the one at \(w\).

\subsection{Kernel theorem}
\label{sm:sec:kernel}

In this work we denote the average of a configuration-dependent quantity \(A\) over a probability law \(P\) on the finite set of configurations by \(\Ex_P[A]=\sum_\mathbf{n}P(\mathbf{n})A(\mathbf{n})\). Let us prepare the quasihole at a chosen site \(w_0\ne z_i\), and repeat the occupation measurement \(M\) times. The outcomes \(\mathbf{n}^{(1)},\ldots,\mathbf{n}^{(M)}\) are independent samples providing the following probability law and average:
\begin{equation}
 P_0(\mathbf{n}):=P_{w_0}(\mathbf{n})=
 \frac{|\Phi_\mathbf{n}(w_0)|^2}{Z(w_0)},\qquad
 \Ex_0[A]:=\sum_\mathbf{n}P_0(\mathbf{n})A(\mathbf{n}).
 \label{sm:eq:refdist}
\end{equation}
We also define the known amplitude ratio as
\begin{equation}
 r_\mathbf{n}(w):=\frac{f_\mathbf{n}(w)}{f_\mathbf{n}(w_0)},\qquad
 \Phi_\mathbf{n}(w)=\Phi_\mathbf{n}(w_0)r_\mathbf{n}(w).
 \label{sm:eq:ratio}
\end{equation}

\begin{theorem}[Single-reference kernel]
For any target positions \(w,v\), the measured reference law determines the unnormalized overlap kernel
\begin{equation}
 \mathcal K(w,v):=\Ex_0\!\left[
 \overline{r_\mathbf{n}(w)}r_\mathbf{n}(v)\right]
 =\frac{\langle\Phi(w)|\Phi(v)\rangle}{Z(w_0)},
 \qquad \mathcal K(w,w)=\frac{Z(w)}{Z(w_0)}.
 \label{sm:eq:kernel-def}
\end{equation}
Moreover, if \(Z(w),Z(v)>0\), the corresponding normalized overlap is determined by
\begin{equation}
 \langle\psi(w)|\psi(v)\rangle=
 \frac{\mathcal K(w,v)}
 {\sqrt{\mathcal K(w,w)\mathcal K(v,v)}}.
 \label{sm:eq:overlap}
\end{equation}
\end{theorem}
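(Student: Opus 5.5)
The plan is a direct computation that expands the expectation in \eqref{sm:eq:kernel-def} as a finite sum over configurations. One point needs care first. Since \(w_0\neq z_i\) for every site, each factor \(w_0-z_i\) is nonzero, so \(f_\mathbf{n}(w_0)\neq0\) for every configuration \(\mathbf{n}\). Hence the ratio \(r_\mathbf{n}(w)\) in \eqref{sm:eq:ratio} is well defined for all \(\mathbf{n}\), whether or not \(\Phi_0(\mathbf{n})\) vanishes. Configurations with \(P_0(\mathbf{n})=0\) contribute nothing to the average. They also contribute nothing to \(\langle\Phi(w)|\Phi(v)\rangle\), because their amplitudes \(\Phi_0(\mathbf{n})f_\mathbf{n}(w)\) vanish for every \(w\). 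So no information is lost by averaging only over the support of \(P_0\).

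The key step is to substitute \(P_0(\mathbf{n})=|\Phi_\mathbf{n}(w_0)|^2/Z(w_0)\) and write \(|\Phi_\mathbf{n}(w_0)|^2=\overline{\Phi_\mathbf{n}(w_0)}\,\Phi_\mathbf{n}(w_0)\). This gives
\[
\Ex_0\!\left[\overline{r_\mathbf{n}(w)}r_\mathbf{n}(v)\right]
=\frac{1}{Z(w_0)}\sum_\mathbf{n}\overline{\Phi_\mathbf{n}(w_0)r_\mathbf{n}(w)}\;\Phi_\mathbf{n}(w_0)r_\mathbf{n}(v)
=\frac{1}{Z(w_0)}\sum_\mathbf{n}\overline{\Phi_\mathbf{n}(w)}\,\Phi_\mathbf{n}(v),
\]
using \(\Phi_\mathbf{n}(w)=\Phi_\mathbf{n}(w_0)r_\mathbf{n}(w)\) from \eqref{sm:eq:ratio}. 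That relation itself follows from the factorization \eqref{sm:eq:family}: the unknown \(\Phi_0(\mathbf{n})\) cancels between \(\Phi_\mathbf{n}(w)\) and \(\Phi_\mathbf{n}(w_0)\). Because \(\{\ket{\mathbf{n}}\}\) is the fixed orthonormal basis of \eqref{sm:eq:states}, the last sum equals \(\langle\Phi(w)|\Phi(v)\rangle\). The unknown reference phases cancel precisely because the bra carries \(\overline{\Phi_\mathbf{n}(w_0)}\) and the ket carries \(\Phi_\mathbf{n}(w_0)\). Setting \(v=w\) gives \(\mathcal K(w,w)=Z(w)/Z(w_0)\).

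For the normalized overlap, assume \(Z(w),Z(v)>0\) and \(Z(w_0)>0\); the last holds because \(P_0\) is a probability law. Then
\[
\langle\psi(w)|\psi(v)\rangle=\frac{\langle\Phi(w)|\Phi(v)\rangle}{\sqrt{Z(w)Z(v)}}.
\]
Substituting \(\langle\Phi(w)|\Phi(v)\rangle=Z(w_0)\mathcal K(w,v)\) and \(Z(w)=Z(w_0)\mathcal K(w,w)\), and likewise for \(v\), the factors of \(Z(w_0)\) cancel, since \(\sqrt{Z(w_0)^2}=Z(w_0)\). This yields \eqref{sm:eq:overlap}.

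There is no real obstacle. The only delicate points are bookkeeping: the nonvanishing of \(f_\mathbf{n}(w_0)\), which is guaranteed by \(w_0\neq z_i\), and the observation that the conjugation pattern is exactly what removes the unmeasured phases of \(\Phi_0\). I would also remark that the identity holds for any \(w,v\), including sites, since \(r_\mathbf{n}\) is a polynomial in the target position.
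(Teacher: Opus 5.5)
Your proof is correct and follows the paper's own argument. You substitute \(P_0\), let the reference amplitudes combine with the known ratios into \(\overline{\Phi_\mathbf{n}(w)}\,\Phi_\mathbf{n}(v)\), then set \(v=w\) and divide by the two norms. Your extra bookkeeping (\(f_\mathbf{n}(w_0)\neq0\) from \(w_0\neq z_i\), zero-probability configurations having \(\Phi_0(\mathbf{n})=0\), and \(Z(w_0)>0\)) is sound and simply makes explicit what the paper leaves implicit.
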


\begin{proof}
The proof is by direct substitution and factor adjustment:
\begin{align}
 \mathcal K(w,v)
 &=\sum_\mathbf{n}\frac{|\Phi_\mathbf{n}(w_0)|^2}{Z(w_0)}
 \frac{\overline{f_\mathbf{n}(w)}}{\overline{f_\mathbf{n}(w_0)}}
 \frac{f_\mathbf{n}(v)}{f_\mathbf{n}(w_0)}\nonumber\\
 &=\frac1{Z(w_0)}
 \sum_\mathbf{n}\overline{\Phi_\mathbf{n}(w)}\Phi_\mathbf{n}(v)
 =\frac{\langle\Phi(w)|\Phi(v)\rangle}{Z(w_0)}.
\label{sm:eq:kernel-proof}
\end{align}
Setting \(v=w\) gives the norm ratio, and dividing by the two norms gives \eqref{sm:eq:overlap}.
\end{proof}

It is worth seeing explicitly why the unknown reference phases disappear. Let us write the reference amplitude in polar form, \(\Phi_\mathbf{n}(w_0)=A_\mathbf{n}e^{\ii\theta_\mathbf{n}}\). The experiment then measures \(P_0(\mathbf{n})=A_\mathbf{n}^2/Z(w_0)\) and thus does not know anything about \(\theta_\mathbf{n}\). Now, the amplitudes at two targets are \(\Phi_\mathbf{n}(w)=A_\mathbf{n}e^{\ii\theta_\mathbf{n}}r_\mathbf{n}(w)\) and \(\Phi_\mathbf{n}(v)=A_\mathbf{n}e^{\ii\theta_\mathbf{n}}r_\mathbf{n}(v)\) due to \eqref{sm:eq:ratio}, so each term of the inner product is
\begin{equation}
 \overline{\Phi_\mathbf{n}(w)}\,\Phi_\mathbf{n}(v)
 =A_\mathbf{n}^2\,\overline{r_\mathbf{n}(w)}\,r_\mathbf{n}(v).
 \label{sm:eq:phasecancel}
\end{equation}
We can now write \(r_\mathbf{n}(w)=|r_\mathbf{n}(w)|e^{\ii\varphi_\mathbf{n}(w)}\), such that the surviving factor contains the relative phase \(e^{\ii[\varphi_\mathbf{n}(v)-\varphi_\mathbf{n}(w)]}\), and is the source of the complex information in \(\mathcal K\). The identity assumes that the known ratios \(r_\mathbf{n}(w)\) give the correct \(w\)-dependence of the amplitudes.  It therefore reconstructs the whole family built from those ratios, but it cannot detect an unknown \(w\)-dependent correction to them, because all the data are taken at \(w_0\).

\subsection{Robustness and readout errors}

The restriction to the factorized family in the above kernel identity is stable in a precise sense: a state that departs slightly from~\eqref{sm:eq:family} gives a kernel that departs correspondingly little. To see this, let the target state predicted by the assumed ratios be
\begin{equation}
 \ket{\widetilde\Phi(w)}:=
 \sum_\mathbf{n}\Phi_\mathbf{n}(w_0)r_\mathbf{n}(w)\ket{\mathbf{n}},
 \label{sm:eq:modeled-continuation}
\end{equation}
while the true target state as \(\ket{\Phi^{\rm true}(w)}=\ket{\widetilde\Phi(w)}+\ket{\delta_w}\), so that \(\ket{\delta_w}\) collects whatever part of the true state the assumed ratios fail to capture.  Since \(r_\mathbf{n}(w_0)=1\), the two agree at the reference and \(\delta_{w_0}=0\). Now set \(A_w=\norm{\widetilde\Phi(w)}\) and \(\epsilon_w=\norm{\delta_w}\), and define \(\mathcal K^{\rm true}(w,v)= \langle\Phi^{\rm true}(w)|\Phi^{\rm true}(v)\rangle/Z(w_0)\), such that the kernel inferred from the reference law obeys
\begin{align}
 \left|\mathcal K^{\rm true}(w,v)-\mathcal K(w,v)\right|
 \le \frac{\epsilon_w A_v+A_w\epsilon_v+\epsilon_w\epsilon_v}{Z(w_0)}.
 \label{sm:eq:stability-bound}
\end{align}
This follows by expanding the true overlap and bounding its three error terms with Cauchy--Schwarz.  If the deviation is measured relative to the modelled state, so that \(\epsilon_w\le\eta_w A_w\) and \(\epsilon_v\le\eta_v A_v\), the right-hand side becomes \(A_wA_v(\eta_w+\eta_v+\eta_w\eta_v)/Z(w_0)\), so the kernel error stays small on the scale \(A_wA_v/Z(w_0)\).  It is worth pointing out the case where the true amplitudes carry an unknown factor \(c_\mathbf{n}\) that does not depend on \(w\), so that \(\Phi^{\rm true}_\mathbf{n}(w)=c_\mathbf{n}\Phi_0(\mathbf{n})f_\mathbf{n}(w)\). Then \(c_\mathbf{n}\) appears in numerator and denominator of \(\Phi^{\rm true}_\mathbf{n}(w)/\Phi^{\rm true}_\mathbf{n}(w_0)\) and cancels, leaving \(r_\mathbf{n}(w)\) unchanged, so \(\delta_w=0\). On the other hand, a genuinely \(w\)-dependent departure from the amplitude via an unknown \(\delta r_\mathbf{n}(w)\) that vanishes at \(w_0\) but not elsewhere, contributes to \(\delta_w\).

Readout errors are a different problem, because they act on the recorded configuration rather than on the amplitude. Mathematically, let \(E_{\mathbf{m}\mathbf{n}}\) be the probability that a detector reports \(\mathbf{m}\) when the true configuration is \(\mathbf{n}\), so that each column of \(E\) sums to one, \(\sum_m E_{\mathbf{m}\mathbf{n}}=1\) and the measured law is \(\widetilde P_0=E P_0\). Since the label attached to a shot may now be wrong, the empirical average instead converges to \(\sum_\mathbf{m}\widetilde P_0(\mathbf{m})\overline{r_\mathbf{m}(w)}r_\mathbf{m}(v) \neq \mathcal K(w,v)\).

There are some ways to get around such issues. If \(E\) is characterized independently, as is routine in quantum-gas microscopy, and is invertible on the relevant configurations, then \(P_0=E^{-1}\widetilde P_0\) removes the readout bias in expectation, although the inversion amplifies statistical noise and at finite sample size can return slightly negative entries. Another possible way out is the fact that the normalized quantities of~\eqref{sm:eq:overlap} reconstructed from two distinct reference positions \(w_0\) and \(w_0'\) must agree wherever both ensembles remain statistically accessible. The unnormalized kernels differ by the constant factor \(Z(w_0')/Z(w_0)\), which is why the test is applied after normalization. It probes the assumed ratios and the detector together, so a disagreement shows that something is inconsistent without saying which of the above two issues is at fault.

\section{Finite-separation geometry and empirical kernel}
\label{sm:sec:global}

Before any derivative is taken, the kernel already fixes the geometry at finite separation.  Writing \(\rho_w=\ket{\psi(w)}\!\bra{\psi(w)}\), the pure-state trace distance
\begin{equation}
 T(\rho_w,\rho_v):=\frac12\norm{\rho_w-\rho_v}_1
 =\sqrt{1-|\langle\psi(w)|\psi(v)\rangle|^2}
 \label{sm:eq:tracedist}
\end{equation}
follows directly from \(\mathcal K\)~\eqref{sm:eq:overlap}, as do the gauge-invariant Bargmann products \cite{Bargmann1964}
\begin{equation}
 \mathcal B_{1\cdots L}:=
 \langle\psi_1|\psi_2\rangle\langle\psi_2|\psi_3\rangle\cdots
 \langle\psi_L|\psi_1\rangle ,
 \label{sm:eq:bargmann}
\end{equation}
whose higher-point values determine the projective geometry of the embedded curve \cite{AvdoshkinPopov2023}. Fixing the orientation sign needs a convention, and we take
\begin{equation}
 \mathcal A=\ii\langle\psi|\dd\psi\rangle,\qquad
 \Omega=\dd\mathcal A.
\label{sm:eq:berry-connection}
\end{equation}
For a short directed segment, \(\arg\langle\psi(\lambda)|\psi(\lambda+\dd\lambda)\rangle =-\mathcal A_a\dd\lambda^a+O(|\dd\lambda|^2)\).  A fine polygon approximating an oriented loop \(C\) therefore obeys
\begin{equation}
 \arg\mathcal B_C\longrightarrow-\oint_C\mathcal A
 =-\int_{\Sigma_C}\Omega\pmod{2\pi},
\label{sm:eq:bargmann-berry}
\end{equation}
where \(\Sigma_C\) is any surface whose boundary is \(C\). The choice of the loop $C$ is immaterial modulo \(2\pi\), and reversing the ordering of the vertices reverses the sign. Invariants of this kind are usually obtained through coherent cycle tests performed on several separately prepared states \cite{OszmaniecBrodGalvao2024}, whereas here we reconstruct, via~\eqref{sm:eq:kernel-def}, every factor in the product from one fixed-basis reference distribution.

\subsection{Finite-shot estimator and positivity}

The finite-shot estimator
\begin{equation}
 \widehat{\mathcal K}(w,v)=\frac1M\sum_{s=1}^M
 \overline{r_{\mathbf{n}^{(s)}}(w)}r_{\mathbf{n}^{(s)}}(v)
 \label{sm:eq:kernel-est}
\end{equation}
is unbiased and remains a valid Gram kernel before any large-sample limit is taken.

\begin{proposition}[Finite-shot positivity]
For positions \(w_1,\ldots,w_L\), define \(\boldsymbol u_s=(\overline{r_{\mathbf{n}^{(s)}}(w_1)},\ldots, \overline{r_{\mathbf{n}^{(s)}}(w_L)})^{\!\top}\). Then
\begin{equation}
 \widehat{\mathcal K}_{ab}=\widehat{\mathcal K}(w_a,w_b),
 \qquad \widehat{\mathcal K}=\frac1M\sum_{s=1}^M
 \boldsymbol u_s\boldsymbol u_s^\dagger\succeq0.
 \label{sm:eq:kernel-psd}
\end{equation}
\end{proposition}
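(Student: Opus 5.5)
The plan is to prove both claims of the Proposition by writing the matrix $\widehat{\mathcal K}$ explicitly as an average of rank-one outer products. For each shot $s$, set $u_{s,a}=\overline{r_{\mathbf{n}^{(s)}}(w_a)}$, as in the statement. The $(a,b)$ entry of $\boldsymbol u_s\boldsymbol u_s^\dagger$ is then $u_{s,a}\overline{u_{s,b}}=\overline{r_{\mathbf{n}^{(s)}}(w_a)}\,r_{\mathbf{n}^{(s)}}(w_b)$. Averaging over $s$ reproduces the estimator~\eqref{sm:eq:kernel-est} evaluated at $(w,v)=(w_a,w_b)$. This gives the matrix identity in~\eqref{sm:eq:kernel-psd} entry by entry. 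The only thing to check is the placement of the complex conjugate: the conjugated ratio sits at the bra argument, and the outer product $\boldsymbol u\boldsymbol u^\dagger$ places the conjugate on the column index.

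Positivity then follows from a direct quadratic-form computation. For any $\boldsymbol c\in\mathbb C^L$,
\begin{equation*}
 \boldsymbol c^\dagger\widehat{\mathcal K}\boldsymbol c=\frac1M\sum_{s=1}^M\bigl|\boldsymbol u_s^\dagger\boldsymbol c\bigr|^2=\frac1M\sum_{s=1}^M\Bigl|\sum_{b}r_{\mathbf{n}^{(s)}}(w_b)\,c_b\Bigr|^2\ge0 .
\end{equation*}
Each summand is a squared modulus, so $\widehat{\mathcal K}$ is Hermitian and positive semidefinite. This holds for every $M\ge1$ and every realization of the samples, with no large-sample limit required.

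The ratios must be well defined for this to work. That is guaranteed because $w_0\ne z_i$, so $f_{\mathbf n}(w_0)\neq0$ for every configuration, and the $r_{\mathbf n}(w_a)$ are finite complex numbers for arbitrary targets $w_a$.

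As a remark, I would also note the interpretation. $\widehat{\mathcal K}$ is exactly the Gram matrix of the vectors $|\widetilde\Phi(w_a)\rangle=\sum_{\mathbf n}\sqrt{\widehat P(\mathbf n)}\,r_{\mathbf n}(w_a)|\mathbf n\rangle$, where $\widehat P$ is the empirical law, up to the phase of the reference amplitudes. This explains why normalized overlaps built from $\widehat{\mathcal K}$ have modulus at most one.

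There is no real obstacle here. The only delicate step is the index and conjugation convention identified above, which I would fix carefully before writing the one-line positivity argument.
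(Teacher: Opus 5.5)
Your proposal is correct and is essentially the paper's argument. The paper observes that each $\boldsymbol u_s\boldsymbol u_s^\dagger$ is a rank-one positive matrix and that an average of positive matrices is positive; your identity $\boldsymbol c^\dagger\widehat{\mathcal K}\boldsymbol c=\frac1M\sum_s|\boldsymbol u_s^\dagger\boldsymbol c|^2\ge0$ is that statement written out, and your entrywise conjugation check and the well-definedness remark via $w_0\neq z_i$ are correct details the paper leaves implicit.
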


\begin{proof}
Each term \(\boldsymbol u_s\boldsymbol u_s^\dagger\) is a rank-one positive matrix, and a positive sum remains positive.
\end{proof}

The conjugation in \eqref{sm:eq:kernel-psd} is the same as in \eqref{sm:eq:kernel-def}, so any finite data set returns a valid Gram matrix.  The estimator \(\widehat{\mathcal K}\) is unbiased, whereas the normalized overlap obtained by inserting it into \eqref{sm:eq:overlap} is a ratio of estimates and therefore carries a small bias at finite \(M\), which vanishes as \(M\to\infty\) whenever the two diagonal entries are nonzero.  Ordinary importance reweighting recovers expectation values at the target \cite{FoulkesEtAl2001}, whereas keeping the two arguments of the kernel independent recovers the off-diagonal overlaps and their phases.

\section{Exact projective compression}
\label{sm:sec:compress}

The quasihole family sits inside \(\mathcal H_{\rm mb}\), whose dimension grows exponentially with \(N_s\), but it does not fill more than a small part of it.  We show here that the states \(\ket{\Phi(w)}\) span at most \(pN_p+1\) dimensions, regardless of \(w\), and that this follows from the same factorization used above. Each amplitude factor is a polynomial of degree \(d=pN_p\), and thus with \(f_\mathbf{n}(w)=\sum_{k=0}^{d}a_k(\mathbf{n})w^k\) and using \eqref{sm:eq:states} we get,
\begin{equation}
 \ket{\Phi(w)}=\sum_{k=0}^{d}w^k\ket{\chi_k},\qquad
 \ket{\chi_k}=\sum_\mathbf{n}\Phi_0(\mathbf{n})a_k(\mathbf{n})\ket{\mathbf{n}}.
 \label{sm:eq:vk}
\end{equation}
The whole family therefore lies in \(\operatorname{span}\{\chi_0,\ldots,\chi_d\}\), and using \(m_d(w)=(1,w,\ldots,w^d)^{\!\top}\) the kernel \eqref{sm:eq:kernel-def} becomes
\begin{equation}
 \mathcal K(w,v)=\frac{m_d(w)^\dagger Hm_d(v)}{Z(w_0)},\qquad
 H_{kl}:=\langle\chi_k|\chi_l\rangle
 =\sum_\mathbf{n}|\Phi_0(\mathbf{n})|^2
 \overline{a_k(\mathbf{n})}a_l(\mathbf{n}).
 \label{sm:eq:Hmatrix}
\end{equation}
\begin{theorem}[Exact projective compression]
\label{sm:thm:compress}
The quasihole family spans a complex subspace of dimension \(\operatorname{rank}H\le d+1=pN_p+1\).  If \(\operatorname{rank}H=r+1\), its projective image lies in a minimal \(\mathbb{CP}^{r}\), with \(r\le pN_p\).  The matrix \(H\) is positive semidefinite.
\end{theorem}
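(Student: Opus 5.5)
The plan is to read everything off the Gram structure of the vectors \(\ket{\chi_0},\ldots,\ldots,\ket{\chi_d}\) in \eqref{sm:eq:vk}. Let \(C\) be the \(\dim\mathcal H_{\rm mb}\times(d+1)\) matrix whose \(k\)-th column is \(\ket{\chi_k}\). Then \(H=C^\dagger C\) by \eqref{sm:eq:Hmatrix}. Positive semidefiniteness follows at once, since \(c^\dagger Hc=\norm{Cc}^2\ge0\) for every \(c\in\mathbb C^{d+1}\). Equivalently, the explicit form \(H_{kl}=\sum_\mathbf{n}|\Phi_0(\mathbf{n})|^2\overline{a_k(\mathbf{n})}a_l(\mathbf{n})\) writes \(H\) as a nonnegative combination of rank-one projectors \(\boldsymbol a(\mathbf{n})\boldsymbol a(\mathbf{n})^\dagger\), in the same way as the finite-shot positivity proposition. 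Next, \(\operatorname{rank}H=\operatorname{rank}C=\dim\operatorname{span}\{\chi_k\}\), because \(\ker C^\dagger C=\ker C\). The trivial bound \(\operatorname{rank}H\le d+1=pN_p+1\) then follows from the number of columns, using \(d=pN_p\): for \(\sum_in_i=N_p\), \(f_\mathbf{n}\) has degree exactly \(pN_p\).

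The main point is that the family spans all of \(\operatorname{span}\{\chi_k\}\), and not just a subset of it. Let \(S=\operatorname{span}_{\mathbb C}\{\ket{\Phi(w)}:w\in\mathbb C\}\). Then \(S\subseteq\operatorname{span}\{\chi_k\}\) by \eqref{sm:eq:vk}. For the reverse inclusion, I will pick \(d+1\) distinct points \(w_0',\ldots,w_d'\). The matrix \(V_{jk}=(w_j')^k\) is an invertible Vandermonde matrix, so \(\ket{\chi_k}=\sum_j(V^{-1})_{kj}\ket{\Phi(w_j')}\in S\). Hence \(\dim S=\operatorname{rank}H\). An alternative argument: if \(\ket\xi\perp\ket{\Phi(w)}\) for all \(w\), then the polynomial \(\langle\xi|\Phi(w)\rangle=\sum_kw^k\langle\xi|\chi_k\rangle\) vanishes identically. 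All its coefficients then vanish, so \(\ket\xi\perp\chi_k\) for every \(k\). I will use whichever argument reads more cleanly, probably the Vandermonde one.

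For the projective statement, set \(\operatorname{rank}H=r+1\). The rays \([\Phi(w)]\), where \(Z(w)>0\), lie in \(\mathbb P(S)\cong\mathbb{CP}^r\). Minimality means that no proper projective subspace \(\mathbb P(S')\) with \(S'\subsetneq S\) contains all these rays. The step needing care is that points with \(Z(w)=0\) are excluded. However, \(Z(w)=\norm{\Phi(w)}^2\) is a nonzero real polynomial in \((x,y)\) whenever \(r\ge0\), so its zero set is nowhere dense. If a proper \(S'\) contained \(\ket{\Phi(w)}\) for all \(w\) off that set, continuity would give containment for all \(w\), and the spanning argument above would force \(S\subseteq S'\), a contradiction. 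Finally \(r\le pN_p\) follows from the rank bound. I expect this density and continuity point to be the only nonalgebraic step, and it is the one I would write out with most care.
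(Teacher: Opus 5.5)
Your proposal is correct and follows essentially the same route as the paper: positivity from $c^\dagger Hc=\norm{\sum_k c_k\ket{\chi_k}}^2$, the identification $\operatorname{rank}H=\dim\operatorname{span}\{\chi_k\}$, and Vandermonde inversion at $d+1$ distinct points to show that the family spans that same space.

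Your extra density and continuity step for the minimality of $\mathbb{CP}^r$ is a harmless refinement that the paper omits. In fact the zero set of $Z$ is finite: it is the set of common zeros of the nonzero polynomials $\Phi_\mathbf{n}(w)$, which lies among the sites. So you could simply choose the Vandermonde nodes off it. One small slip: in your alternative positivity argument the rank-one terms should be $\overline{\boldsymbol a(\mathbf{n})}\,\overline{\boldsymbol a(\mathbf{n})}^{\dagger}$ rather than $\boldsymbol a(\mathbf{n})\boldsymbol a(\mathbf{n})^\dagger$, since the latter sums to $\overline H$; this does not affect positivity.
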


\begin{proof}
For any \(c\in\mathbb C^{d+1}\),
\begin{equation}
 c^\dagger Hc=\norm{\sum_{k=0}^{d}c_k\ket{\chi_k}}^2\ge0,
 \label{sm:eq:Hpositive}
\end{equation}
so \(H\succeq0\), and the right-hand side vanishes only when the combination \(\sum_kc_k\ket{\chi_k}\) itself vanishes.  A vector annihilated by \(H\) is therefore the same thing as a linear relation among the \(\ket{\chi_k}\), so the rank of \(H\) counts how many of those \(d+1\) vectors are linearly independent,
\begin{equation}
 \operatorname{rank}H
 =\dim_{\mathbb C}\operatorname{span}\{\chi_0,\ldots,\chi_d\}\le d+1.
 \label{sm:eq:rankspan}
\end{equation}
The span of the \(\ket{\chi_k}\) coincides with that of the family itself.  Every \(\ket{\Phi(w)}\) is a combination of the \(\ket{\chi_k}\) by \eqref{sm:eq:vk}, and conversely the \(\ket{\chi_k}\) are recovered from finitely many members of the family.  Evaluating \eqref{sm:eq:vk} at \(d+1\) distinct positions \(\xi_0,\ldots,\xi_d\) gives \(\ket{\Phi(\xi_j)}=\sum_kV_{kj}\ket{\chi_k}\), where \(V_{kj}=\xi_j^k\) is a Vandermonde matrix and is invertible whenever the \(\xi_j\) are distinct.  Inverting it writes each \(\ket{\chi_k}\) as a combination of those \(d+1\) states.  The two spans therefore coincide, and passing from vectors to rays lowers the dimension by one.
\end{proof}

Here the minimal projective ambient space means the smallest projective linear subspace containing every ray \([\Phi(w)]\) of the quasihole family.  If \(\operatorname{rank}H=r+1\), the linear span has dimension \(r+1\) and that subspace is \(\mathbb{CP}^{r}\).  Inside it the family traces the curve
\begin{equation}
 w\longmapsto[\,\chi_0+w\chi_1+\cdots+w^d\chi_d\,],
 \label{sm:eq:curve}
\end{equation}
which is holomorphic and of one complex dimension.  That is a different statement from \(r\): the curve stays one-dimensional however large its linear span happens to be, and \(r\) records only how many independent directions the span needs.  Rank one is the degenerate case in which the whole family collapses to a single fixed ray.  The bound holds exactly rather than approximately, since it came from the degree of \(f_\mathbf{n}(w)\) and not from truncating anything.

None of this requires knowing the \(\ket{\chi_k}\) themselves, which is what makes the rank an experimentally accessible number.  Take the same \(d+1\) points \(\xi_j\) and the same \(V\) as in the proof, and collect the kernel between them into \(K^{(\xi)}_{jl}:=\mathcal K(\xi_j,\xi_l)\).  Because the columns of \(V\) are the monomial vectors \(m_d(\xi_j)\), \eqref{sm:eq:Hmatrix} says that these \((d+1)^2\) kernel values are
\begin{equation}
 K^{(\xi)}=V^\dagger\widetilde HV,\qquad
 \widetilde H:=\frac{H}{Z(w_0)},
 \label{sm:eq:Kxi}
\end{equation}
and inverting the Vandermonde matrix recovers the rescaled moment matrix from them,
\begin{equation}
 \widetilde H=(V^\dagger)^{-1}K^{(\xi)}V^{-1}.
\label{sm:eq:Hinterpolation}
\end{equation}
If \(Z(w_0)\) is known independently then \(H=Z(w_0)\widetilde H\), and since that positive factor is common to every entry it changes neither the rank nor the projective geometry.  This is a \((d+1)\times(d+1)\) matrix of kernel values between quasihole positions, and it is not the matrix of state vectors used numerically below, which has the same rank but different entries and a different shape.

For the NCS state of the reference calculation below, \(p=1\) and \(N_p=3\), so \(d=3\) and the theorem bounds the span by four dimensions while the ambient space has \(\dim\mathcal H_{\rm mb}=\binom83=56\).  That bound is saturated.  Collecting the many-body state at forty positions \(w_j\) spaced around the circle \(|w|=1.5\) into the columns of
\begin{equation}
 S=\big[\,\ket{\Phi(w_1)}\ \cdots\ \ket{\Phi(w_{40})}\,\big]
 \in\mathbb C^{56\times40},
 \label{sm:eq:Smatrix}
\end{equation}
the normalized singular values of \(S\) are \(\{1,0.6622864,0.5232412,0.4879987\}\) with the remaining thirty-six at the \(10^{-16}\) numerical floor, so \(\operatorname{rank}S=4\) and therefore \(\operatorname{rank}H=4\).  Four generic positions already reveal the full span, and sampling forty provides an overdetermined numerical check.  Hence \(r=3\) and
\begin{equation}
 \{[\Phi(w)]:w\in\mathbb C\}\subset\mathbb{CP}^3\subset\mathbb{CP}^{55},
 \label{sm:eq:CP3}
\end{equation}
where the left-hand side is the entire continuous curve and not only the forty points sampled from it.

\section{Local geometry and calibrated distance}
\label{sm:sec:local}

So far the kernel has been used at two separate positions \(w\) and \(v\).  Bringing them together gives the local geometry, while keeping them apart yields the finite spectral distance derived at the end of the section.

\subsection{Quantum geometric tensor}

We start with coordinates \(\lambda^a=(x,y)\) with \(w=x+\ii y\), and projector \(\Pi_w=\ket{\psi(w)}\!\bra{\psi(w)}\), such that the projected derivatives and the quantum geometric tensor are
\begin{equation}
 \ket{D_a\psi}:=(1-\Pi_w)\partial_a\ket{\psi(w)},\qquad
 \mathcal Q_{ab}:=\langle D_a\psi|D_b\psi\rangle .
 \label{sm:eq:covtangent}
\end{equation}
Its real and imaginary parts give the Fubini--Study metric and the Berry curvature \cite{ProvostVallee1980,KolodrubetzEtAl2017,Berry1984},
\begin{equation}
 g^{\rm FS}_{ab}=\operatorname{Re}\mathcal Q_{ab},\qquad
 \Omega_{ab}=-2\operatorname{Im}\mathcal Q_{ab},
 \label{sm:eq:gOmega}
\end{equation}
with the sign fixed by the Berry-connection convention of \eqref{sm:eq:berry-connection}.  For the quasihole family both reduce to a single real function of \(w\).

The occupation law at a target \(w\), and the average against it, are
\begin{equation}
 P_w(\mathbf{n})=\frac{|\Phi_\mathbf{n}(w)|^2}{Z(w)},\qquad
 \langle A\rangle_w:=\sum_\mathbf{n}P_w(\mathbf{n})A(\mathbf{n}),
 \label{sm:eq:Pw}
\end{equation}
of which \eqref{sm:eq:refdist} is the case \(w=w_0\). Each configuration carries the logarithmic derivative of its amplitude factor,
\begin{equation}
 X_\mathbf{n}(w):=\partial_w\log f_\mathbf{n}(w)
 =p\sum_i\frac{n_i}{w-z_i},\qquad
 \Delta X:=X_\mathbf{n}(w)-\langle X\rangle_w,
 \label{sm:eq:X}
\end{equation}
evaluated classically from the recorded configuration and the known lattice positions. Although \(X_\mathbf{n}\) has a simple pole at each occupied site, the averages below remain finite. To see this, let us consider the case \(n_j=1\) where the weight \(P_w(\mathbf{n})\) carries a factor \(|w-z_j|^{2p}\), so
\begin{equation}
 P_w(\mathbf{n})\,|X_\mathbf{n}(w)|^2=O\!\left(|w-z_j|^{2p-2}\right),
 \qquad w\to z_j,
 \label{sm:eq:polecancel}
\end{equation}
which is bounded for \(p\ge1\).  For the family of \eqref{sm:eq:family}, whose \(\Phi_0(\mathbf{n})\) is nonzero on every configuration, the normalization at a site is
\begin{equation}
 Z(z_j)=\sum_{\mathbf{n}:\,n_j=0}|\Phi_0(\mathbf{n})|^2
 \prod_{i\ne j}|z_j-z_i|^{2pn_i}>0
 \label{sm:eq:Zsite}
\end{equation}
whenever \(N_p<N_s\), so the normalized state and the geometry built from it are regular on the lattice.

Next, we differentiate the normalization \(Z\) and make use of the holomorphy of \(\Phi_\mathbf{n}(w)\), to obtain
\begin{equation}
 \partial_wZ=\sum_\mathbf{n}|\Phi_\mathbf{n}(w)|^2X_\mathbf{n}(w)=Z\langle X\rangle_w,
 \label{sm:eq:dZ}
\end{equation}
i.e. \(\partial_w\log Z=\langle X\rangle_w\). We differentiate this again with respect to \(\bar w\), which acts on \(|\Phi_\mathbf{n}|^2\) and on \(Z^{-1}\) but not on the holomorphic \(X_\mathbf{n}\), and get
\begin{equation}
 \partial_{\bar w}\partial_w\log Z
 =\langle X\overline X\rangle_w-\langle X\rangle_w\langle\overline X\rangle_w
 =\langle|\Delta X|^2\rangle_w .
 \label{sm:eq:ddlogZ}
\end{equation}
We thus obtain the local metric coefficient as
\begin{equation}
 G(w):=\partial_w\partial_{\bar w}\log Z(w)=\Var_wX
 \label{sm:eq:Gdef}
\end{equation}
from a configuration-space variance without differentiating the normalized state. Using \(\mathcal K(w,w)=Z(w)/Z(w_0)\), equivalently \(G(w)=\partial_w\partial_{\bar w}\log\mathcal K(w,w)\).

In order to evaluate the projected derivatives, we now introduce the diagonal operator
\begin{equation}
 \widehat X(w):=\sum_\mathbf{n}X_\mathbf{n}(w)\ket{\mathbf{n}}\!\bra{\mathbf{n}},
 \qquad \Delta\widehat X:=\widehat X-\langle X\rangle_w\mathbf1 .
 \label{sm:eq:Xop}
\end{equation}

Since \(w=x+\ii y\), the Wirtinger derivatives \(\partial_w=\tfrac12(\partial_x-\ii\partial_y)\) and \(\partial_{\bar w}=\tfrac12(\partial_x+\ii\partial_y)\) invert to \(\partial_x=\partial_w+\partial_{\bar w}\) and \(\partial_y=\ii(\partial_w-\partial_{\bar w})\), and satisfy \(\partial_w\partial_{\bar w}=\tfrac14(\partial_x^2+\partial_y^2)\). Moreover, the holomorphicity of \(\ket{\Phi(w)}\) means \(\partial_{\bar w}\ket\Phi=0\), and both real derivatives collapse onto \(\widehat X\),
\begin{equation}
 \partial_x\ket{\Phi}=\widehat X\ket{\Phi},\qquad
 \partial_y\ket{\Phi}=\ii\widehat X\ket{\Phi}.
 \label{sm:eq:dPhi}
\end{equation}
Differentiating \(\ket\psi=Z^{-1/2}\ket\Phi\) adds a term along \(\ket\psi\), removed by the projector \(1-\Pi_w\), while \(\bra\psi\widehat X\ket\psi=\langle X\rangle_w\) subtracts the mean. The projected derivatives are therefore
\begin{equation}
 \ket{D_x\psi}=\Delta\widehat X\ket{\psi},\qquad
 \ket{D_y\psi}=\ii\Delta\widehat X\ket{\psi},
 \label{sm:eq:tangents}
\end{equation}
differing only by a factor of \(\ii\).  Substituting into \eqref{sm:eq:covtangent} and using \(\bra{\psi}\Delta\widehat X^\dagger\Delta\widehat X\ket{\psi}=\langle|\Delta X|^2\rangle_w=G\) fixes every component,
\begin{equation}
 \mathcal Q=G\begin{pmatrix}1&\ii\\-\ii&1\end{pmatrix},
 \qquad g^{\rm FS}_{ab}=G\delta_{ab},
 \qquad \Omega_{xy}=-2G.
 \label{sm:eq:Q}
\end{equation}

Integrating the curvature over a region \(\Sigma\) and using \eqref{sm:eq:bargmann-berry},
\begin{equation}
 \arg\mathcal B_{\partial\Sigma}=-\int_\Sigma\Omega
 =2\int_\Sigma G\,\dd x\,\dd y,
 \label{sm:eq:phasedensity}
\end{equation}
so \(G\) is also the local geometric-phase density.  For pure states \(\mathcal I_{ab}=4g^{\rm FS}_{ab}\) \cite{BraunsteinCaves1994,Paris2009}, so using \eqref{sm:eq:Q} gives
\begin{equation}
 \mathcal I=4G\,\mathbf1_2,
 \label{sm:eq:QFI}
\end{equation}
implying that the same scalar sets the quantum Fisher scale for local quasihole estimation.  Covariance representations of this kind in logarithmic wave-function derivatives are standard in variational Monte Carlo \cite{Sorella2005}.  The kernel construction extends them to finite-separation overlaps and geometric phases reconstructed from a single reference ensemble.

Here \(g^{\rm FS}_{ab}\) is the metric induced on the quasihole-position parameter manifold, and is not Haldane's guiding-centre metric \cite{Haldane2011}, which characterizes the spatial geometry of correlations in the Hall fluid. On the other hand, if we had snapshots taken directly at \(w\), the resulting coefficient would follow directly from the unbiased sample variance
\begin{equation}
 \widehat G_{\rm direct}(w)=\frac1{M-1}\sum_{s=1}^M
 \big|X_{\mathbf{n}^{(s)}}(w)-\overline X\big|^2,
 \qquad \overline X:=\frac1M\sum_{s=1}^M X_{\mathbf{n}^{(s)}}(w).
 \label{sm:eq:Gdirect}
\end{equation}
In the single-reference protocol, however, every shot is drawn at \(w_0\) and follows \(P_0\) rather than \(P_w\), so \eqref{sm:eq:Gdirect} is not available at any target other than the reference itself.  Recovering \(G(w)\) from those shots requires reweighting them towards \(P_w\). We construct precisely this in the next section and this also fixes how far from \(w_0\) the reconstruction remains usable.

\subsection{Scalar-anchored distance}
\label{sm:sec:anchor}

We have already seen \eqref{sm:eq:tracedist} how the kernel fixes the trace distance. The scalar-anchor spectral triple equips that quantity with a Connes-distance interpretation, at the cost of one externally calibrated length scale. This was already demonstrated for the qubit matrix sector in \cite[Appendix C]{Kumar2026anchor}, and the present setting requires only enlarging the matrix summand from \(M_2(\mathbb C)\) to \(M_{N_{\rm H}}(\mathbb C)\). The latter acts on the fixed occupation basis \(\{\ket{\mathbf{n}}\}\) and has size
\begin{equation}
 N_{\rm H}:=\dim_{\mathbb C}\mathcal H_{\rm mb}=\binom{N_s}{N_p},
 \qquad \mathcal H_{\rm mb}\simeq\mathbb C^{N_{\rm H}},
 \label{sm:eq:NH}
\end{equation}
which is \(56\) for the eight-site benchmark.  A finite spectral triple is the data \((A,\mathcal H,D)\) consisting of a \(*\)-algebra \(A\) represented by \(\pi\) on a finite-dimensional Hilbert space \(\mathcal H\), together with a self-adjoint Dirac operator \(D\) acting on \(\mathcal H\). The commutator \([D,\pi(a)]\) measures how fast an algebra element varies across the space and provides the Lipschitz seminorm from which the distance is built. The data used here are
\begin{align}
 A&=M_{N_{\rm H}}(\mathbb C)\oplus\mathbb C,
 &\mathcal H&=\mathbb C^{N_{\rm H}}\oplus\mathbb C^{N_{\rm H}},
 \nonumber\\
 \pi(a,\beta)&=\begin{pmatrix}a&0\\0&\beta I\end{pmatrix},
 &D_\Lambda&=\Lambda\begin{pmatrix}0&I\\I&0\end{pmatrix},\qquad\Lambda>0,
 \label{sm:eq:spectraldata}
\end{align}
with \((a,\beta)\in A\) and \(I\) the identity on \(\mathbb C^{N_{\rm H}}\).  The matrix summand carries the physical states and the scalar summand \(\beta\in\mathbb C\) is the anchor, represented as \(\beta I\) on the second copy of \(\mathbb C^{N_{\rm H}}\). Here \(D_\Lambda\) is the Dirac operator defining the Lipschitz seminorm in the Connes distance \cite{Connes1994},
\begin{equation}
 d_\Lambda(\rho,\sigma)
 =\sup\left\{\big|\Tr[(\rho-\sigma)a]\big|\;:\;
 a=a^\dagger,\ \beta\in\mathbb R,\
 \norm{[D_\Lambda,\pi(a,\beta)]}\le1\right\}.
 \label{sm:eq:connes-sup}
\end{equation}
Since
\begin{equation}
 [D_\Lambda,\pi(a,\beta)]=\Lambda\begin{pmatrix}
 0&\beta I-a\\ a-\beta I&0\end{pmatrix},\qquad
 \norm{[D_\Lambda,\pi(a,\beta)]}=\Lambda\norm{a-\beta I}_{\rm op},
 \label{sm:eq:lipschitz}
\end{equation}
the admissible elements are those with \(\norm{a-\beta I}_{\rm op}\le1/\Lambda\).  The objective only involves \(\Delta:=\rho-\sigma\), which is traceless, so \(\beta\) cancels and the supremum may be taken over \(b:=\Lambda(a-\beta I)\) with \(\norm{b}_{\rm op}\le1\).  For Hermitian \(\Delta\) the two norms are dual, \(\sup_{\norm{b}_{\rm op}\le1}|\Tr(b\Delta)|=\norm{\Delta}_1\), where the supremum is attained at \(b=\sum_i\operatorname{sgn}(\mu_i)\ket{i}\!\bra{i}\) in the eigenbasis \(\Delta=\sum_i\mu_i\ket{i}\!\bra{i}\). Following the argument of Ref.~\cite[Appendix C]{Kumar2026anchor} we thus obtain
\begin{equation}
 d_\Lambda(\rho,\sigma)=\frac{\norm{\rho-\sigma}_1}{\Lambda}
 =\frac{2}{\Lambda}T(\rho,\sigma),
 \label{sm:eq:connes}
\end{equation}
with \(T\) the trace distance.  For pure states,
\begin{equation}
 d_\Lambda(\rho_\psi,\rho_\phi)=\frac{2}{\Lambda}
 \sqrt{1-|\langle\psi|\phi\rangle|^2},
 \label{sm:eq:purechord}
\end{equation}
and the reconstructed overlap \eqref{sm:eq:overlap} gives the calibrated distance between two quasihole positions,
\begin{equation}
 d_\Lambda(\rho_w,\rho_v)=\frac{2}{\Lambda}
 \sqrt{1-\frac{|\mathcal K(w,v)|^2}
 {\mathcal K(w,w)\mathcal K(v,v)}}.
 \label{sm:eq:kernelchord}
\end{equation}

The scalar summand supplies one state that the matrix sector does not contain, the character
\begin{equation}
 \rho_\bullet(a,\beta)=\beta ,
 \label{sm:eq:anchor}
\end{equation}
which evaluates the scalar and discards the matrix block.  The same Connes metric can be evaluated between a physical density-matrix state and this character, a pair of a different kind from the two matrix-sector states of \eqref{sm:eq:connes}, though both are evaluations of the one spectral distance.  Unit trace gives \(\Tr(\rho a)-\beta=\Tr[\rho(a-\beta I)]\) for any density matrix \(\rho\), so the same substitution \(b=\Lambda(a-\beta I)\) yields
\begin{equation}
 d_\Lambda(\rho,\rho_\bullet)
 =\Lambda^{-1}\sup_{\norm{b}_{\rm op}\le1}|\Tr(\rho b)|
 =\frac{\norm{\rho}_1}{\Lambda}=\frac1\Lambda ,
 \label{sm:eq:anchordist}
\end{equation}
since \(\rho\ge0\) and \(\Tr\rho=1\), with the supremum attained at \(b=I\).  Equation \eqref{sm:eq:connes} therefore gives distances within the physical matrix sector, whereas \eqref{sm:eq:anchordist} gives the distance from that sector to the scalar anchor.  The scalar state is a common metric anchor lying at \(1/\Lambda\) from every state of the matrix sector, while separations inside that sector stay below \(2/\Lambda\).  Because \(\rho_\bullet\) belongs to the scalar sector and not to \(\mathcal H_{\rm mb}\), no measurement on the quasihole family can fix that separation, which is why \(\Lambda\) has to be calibrated externally.

It should be noted here that the spectral triple does not introduce a new measure of distinguishability, since \eqref{sm:eq:connes} returns the trace distance already reconstructed from the kernel. Instead, it gives a spectral-geometric origin for that quantity together with a single calibration constant \(\Lambda\), the off-diagonal strength in \(D_\Lambda\), fixed by one external reference distance rather than inferred from the snapshot ensemble. It sets the unit of length only, since for two physical states \eqref{sm:eq:connes} gives \(T(\rho,\sigma)=\tfrac{\Lambda}{2}d_\Lambda(\rho,\sigma)\) and hence
\begin{equation}
 P_{\rm succ}^{\rm opt}=\tfrac12+\tfrac12T(\rho,\sigma)
 =\tfrac12+\tfrac{\Lambda}{4}d_\Lambda(\rho,\sigma)
 \label{sm:eq:helstrom}
\end{equation}
for equal-prior discrimination \cite{Helstrom1976}, in which \(\Lambda\) cancels, so Helstrom distinguishability is calibration independent.

For nearby positions, expanding \eqref{sm:eq:purechord} with \(1-|\langle\psi(\lambda)|\psi(\lambda+\dd\lambda)\rangle|^2 =g^{\rm FS}_{ab}\dd\lambda^a\dd\lambda^b+O(|\dd\lambda|^3)\) gives the line element
\begin{equation}
 \dd s_\Lambda^2=\frac{4}{\Lambda^2}g^{\rm FS}_{ab}
 \dd\lambda^a\dd\lambda^b
 =\frac{1}{\Lambda^2}\mathcal I_{ab}\dd\lambda^a\dd\lambda^b
 =\frac{4G(w)}{\Lambda^2}|\dd w|^2 .
 \label{sm:eq:line}
\end{equation}
Equation~\eqref{sm:eq:kernelchord} is the finite endpoint distance determined by the trace norm, whereas integrating \(\dd s_\Lambda\) gives the length of a path along the quasihole manifold.  The two coincide only infinitesimally.

\section{Statistical reach and local information}
\label{sm:sec:horizon}

The preceding identities assume the exact reference law \(P_0\).  With \(M\) reference snapshots, reconstruction away from \(w_0\) becomes an importance-sampling problem, and the quantity that limits its range also controls the local information carried by a single snapshot.

\subsection{Finite-sample reconstruction}

Equations \eqref{sm:eq:refdist} and \eqref{sm:eq:ratio} give the ratio of the occupation law at a target \(w\) to the law that is actually sampled,
\begin{equation}
 \mathcal R_w(\mathbf{n}):=\frac{P_w(\mathbf{n})}{P_0(\mathbf{n})}
 =\frac{|r_\mathbf{n}(w)|^2}{\Ex_0|r_\mathbf{n}(w)|^2},
 \qquad \Ex_0[\mathcal R_w]=1,
 \label{sm:eq:ratioR}
\end{equation}
so that \(\Ex_w[A]=\Ex_0[\mathcal R_wA]\) for any configuration-dependent \(A\).  The complex \(r_\mathbf{n}(w)\) of \eqref{sm:eq:ratio} reconstructs off-diagonal overlaps and carries the relative phases, whereas its normalized modulus squared is the positive \(\mathcal R_w\) that reweights diagonal expectations only. A finite sample replaces this exact reweighting by empirical weights. Writing \(\kappa_s:=|r_{\mathbf{n}^{(s)}}(w)|^2\) for the unnormalized weight of the \(s\)th shot,
\begin{equation}
 \alpha_s:=\frac{\kappa_s}{\sum_t\kappa_t},\qquad
 M_{\rm eff}:=\frac1{\sum_s\alpha_s^2}
 =\frac{(\sum_s\kappa_s)^2}{\sum_s\kappa_s^2},
 \label{sm:eq:finiteESS}
\end{equation}
and since \(M^{-1}\sum_t\kappa_t\to\Ex_0[|r_\mathbf{n}(w)|^2]\), the product \(M\alpha_s\) converges to \(\mathcal R_w(\mathbf{n}^{(s)})\).  The effective sample size interpolates between the extremes of weight concentration, returning \(M_{\rm eff}=M\) when all weights are equal and \(M_{\rm eff}\to1\) when one shot carries them all.

What controls that concentration is the R\'enyi-2 divergence \cite{Renyi1961}.  Substituting \(P_w=P_0\mathcal R_w\) in its definition gives
\begin{equation}
 \mathsf D_2(P_w\Vert P_0):=
 \log\sum_\mathbf{n}\frac{P_w(\mathbf{n})^2}{P_0(\mathbf{n})}
 =\log\Ex_0[\mathcal R_w^2],
 \label{sm:eq:renyi-def}
\end{equation}
which vanishes at \(P_w=P_0\) and grows as the weights spread.  Dividing numerator and denominator in \eqref{sm:eq:finiteESS} by \(M\) and using \(M^{-1}\sum_s\kappa_s\to\Ex_0[|r_\mathbf{n}(w)|^2]\) together with \(M^{-1}\sum_s\kappa_s^2\to\Ex_0[|r_\mathbf{n}(w)|^4]\),
\begin{equation}
 \frac{M_{\rm eff}}{M}\longrightarrow
 \frac{\big(\Ex_0[|r_\mathbf{n}(w)|^2]\big)^2}{\Ex_0[|r_\mathbf{n}(w)|^4]}
 =\frac{1}{\Ex_0[\mathcal R_w^2]}
 =e^{-\mathsf D_2(P_w\Vert P_0)} .
 \label{sm:eq:Meff}
\end{equation}
At the far end of the ray in Fig.~1(b) the divergence reaches \(0.6929\simeq\log2\), so \(M_{\rm eff}\simeq M/2\) and the \(4\times10^3\) recorded shots carry the statistical weight of about half that many equally weighted ones. This is not discarding half of the samples, rather it is a reduction of efficiency. For a given target and reference the finite-shot cost is therefore set by \(\mathsf D_2\) and not by the ambient Hilbert-space dimension. Nevertheless \(\mathsf D_2\) itself can grow with system size or displacement, in which case a region of interest is covered by several overlapping reference ensembles.

Since \(\Ex_w[A]\) is approximated by \(\sum_s\alpha_sA_{\mathbf{n}^{(s)}}\), the target variance \(G=\Var_wX\) is estimated by the same expression evaluated on the weighted sample,
\begin{equation}
 \widehat G_{\rm rw}(w)=\sum_s\alpha_s|X_{\mathbf{n}^{(s)}}(w)|^2
 -\left|\sum_s\alpha_sX_{\mathbf{n}^{(s)}}(w)\right|^2,
 \label{sm:eq:reweightG}
\end{equation}
which uses only snapshots drawn at \(w_0\) and is nonnegative because it is a weighted mean square taken about a weighted mean.  The weights carry the random denominator \(\sum_t\kappa_t\), so \eqref{sm:eq:reweightG} is a ratio of sample averages, consistent as \(M\to\infty\) but not exactly unbiased at finite \(M\) \cite{Liu2001}.  Figure~1(b) shows one finite-shot reconstruction, with error bars giving the standard deviation over \(160\) independent reference ensembles.

\subsection{Local information trade-off}
\label{sm:sec:fisher}

The same divergence evaluated between two nearby targets (unlike between a target and the reference as above) produces the occupation Fisher matrix as its quadratic form. Recall the real quasihole coordinates \(\lambda^a=(x,y)\), so that \(P_\lambda\) is the occupation law \(P_w\) of \eqref{sm:eq:Pw} read as a function of those coordinates, and define
\begin{equation}
 s_a(\mathbf{n};\lambda):=\partial_a\log P_\lambda(\mathbf{n}),\qquad
 F^{\rm occ}_{ab}:=\Ex_\lambda[s_as_b].
 \label{sm:eq:score}
\end{equation}
Now expanding \(P_{\lambda+\dd\lambda}/P_\lambda=1+s_a\dd\lambda^a+\tfrac12t_{ab}\dd\lambda^a\dd\lambda^b+O(|\dd\lambda|^3)\) with \(t_{ab}:=(\partial_a\partial_bP_\lambda)/P_\lambda\) followed by squaring and averaging while noting that \(\Ex_\lambda[s_a]=0\) and \(\Ex_\lambda[t_{ab}]=0\) due to normalisation we find the score covariance as the first surviving term,
\begin{equation}
 \mathsf D_2(P_{\lambda+\dd\lambda}\Vert P_\lambda)
 =F^{\rm occ}_{ab}\dd\lambda^a\dd\lambda^b+O(|\dd\lambda|^3).
 \label{sm:eq:renyi-local}
\end{equation}

Furthermore, we have \(\log P_w=\log|\Phi_\mathbf{n}(w)|^2-\log Z(w)\) as well as holomorphicity of \(f_\mathbf{n}(w)\), so differentiating \eqref{sm:eq:Pw} returns the centred variable \(\Delta X\) of \eqref{sm:eq:X},
\begin{equation}
 s_x=2\operatorname{Re}\Delta X,\qquad
 s_y=-2\operatorname{Im}\Delta X,\qquad
 B(w):=\langle(\Delta X)^2\rangle_w .
 \label{sm:eq:scoresXY}
\end{equation}
Here \(B\) supplies the full Fisher matrix together with their correlation, since the scalar \(G=\langle|\Delta X|^2\rangle_w\) fixes only the sum of the two second moments of \(\Delta X\). We can now write \(\Delta X=a+\ii b\) such that \(\operatorname{Re}B=\langle a^2-b^2\rangle_w\) and \(\operatorname{Im}B=2\langle ab\rangle_w\), which leads to \(\langle a^2\rangle_w=(G+\operatorname{Re}B)/2\) and \(\langle b^2\rangle_w=(G-\operatorname{Re}B)/2\). In other words, averaging the outer products of \eqref{sm:eq:scoresXY} gives
\begin{equation}
 F^{\rm occ}=2\begin{pmatrix}
 G+\operatorname{Re}B&-\operatorname{Im}B\\
 -\operatorname{Im}B&G-\operatorname{Re}B
 \end{pmatrix}.
 \label{sm:eq:Focc}
\end{equation}
It is worth noting that even though the full Fisher matrix is anisotropic, its trace is not: \(\operatorname{tr}F^{\rm occ}=4G\). Also, its eigenvalues \(2(G\pm|B|)\) are both nonnegative because \(|B|\le G\) by Cauchy--Schwarz. Thus \(B\) measures how unevenly the information is spread over the \((x,y)\) plane, ranging from \(F^{\rm occ}=2G\mathbf1_2\) at \(B=0\) to a single informative direction with eigenvalues \(4G\) and \(0\) at \(|B|=G\).

We now ask how large that trace can be for an arbitrary measurement, which requires comparing it against the quantum Fisher matrix \eqref{sm:eq:QFI}. For a POVM \(\{M_\mu\}\) with outcome probabilities \(p_\mu(w)=\bra{\psi(w)}M_\mu\ket{\psi(w)}\), the classical Fisher matrix reads
\begin{equation}
 F^{\rm cl}_{ab}[M]=\sum_\mu p_\mu
 (\partial_a\log p_\mu)(\partial_b\log p_\mu).
 \label{sm:eq:Fgeneral}
\end{equation}
Note that the measurements optimal for \(x\) and for \(y\) separately are in general incompatible, so we cannot expect \(F^{\rm cl}=\mathcal I\) and must look instead for a trade-off between the two directions. To that end, let us recall from \eqref{sm:eq:tangents} that \(\ket{D_y\psi}=\ii\ket{D_x\psi}\), so the state together with its entire first-order tangent space lies in
\begin{equation}
 \mathcal S_w:=\operatorname{span}_{\mathbb C}
 \{\ket\psi,\ket{D_x\psi}\},
 \label{sm:eq:local-subspace}
\end{equation}
which has complex dimension two whenever \(G>0\), since \(\ket{D_x\psi}\) is orthogonal to \(\ket\psi\) and has norm \(\sqrt G\). Locally the estimation problem is therefore a qubit problem sitting inside the far larger \(\mathcal H_{\rm mb}\). To see that nothing is lost in this reduction, let \(P_{\mathcal S}\) project onto \(\mathcal S_w\) and compress an arbitrary POVM to \(\widetilde M_\mu=P_{\mathcal S}M_\mu P_{\mathcal S}\). The compressed operators sum to \(P_{\mathcal S}\) and hence form a POVM on \(\mathcal S_w\), and since \(P_{\mathcal S}\ket\psi=\ket\psi\) and \(P_{\mathcal S}\ket{D_a\psi}=\ket{D_a\psi}\), they leave both \(p_\mu\) and \(\partial_ap_\mu\) untouched. Every Fisher contribution from an outcome with \(p_\mu>0\) thus survives the compression, and an outcome with \(p_\mu=0\) forces \(M_\mu\ket\psi=0\) so that its leading variation is quadratic, which the compression preserves as well. The Gill--Massar inequality for a two-dimensional Hilbert space now applies to any single-copy POVM \cite{GillMassar2000},
\begin{equation}
 \Tr\!\left(\mathcal I^{-1}F^{\rm cl}[M]\right)\le1,
 \label{sm:eq:GM}
\end{equation}
which with \(\mathcal I^{-1}=(4G)^{-1}\mathbf1_2\) simply reads \(\operatorname{tr}F^{\rm cl}\le4G\). For separable measurements on repeated copies both the classical and the quantum Fisher matrices are additive, so the same normalised unit bound survives. Also, the tight multiparameter trade-off of Ref.~\cite{WangChenYuan2026PRL} reduces to this bound for the present two-parameter pure-state model.

Now for the occupation POVM \(M_\mathbf{n}=\ket{\mathbf{n}}\!\bra{\mathbf{n}}\), the general expression \eqref{sm:eq:Fgeneral} reduces to \(F^{\rm occ}\), so that \(\operatorname{tr}F^{\rm occ}=4G\) gives
\begin{equation}
 \Tr\!\left(\mathcal I^{-1}F^{\rm occ}\right)
 =\frac{\operatorname{tr}F^{\rm occ}}{4G}=1 .
 \label{sm:eq:saturation}
\end{equation}
In other words, occupation readout saturates the single-copy Gill--Massar trade-off for joint local estimation of \((x,y)\). This ultimately follows from the holomorphic quasihole dependence, which produces the score relations \eqref{sm:eq:scoresXY} and hence \(\operatorname{tr}F^{\rm occ}=4G\), precisely the ceiling set by \eqref{sm:eq:GM}. It is worth noting that the saturation constrains the trace and not the matrix itself: as \(\operatorname{tr}\mathcal I=8G\), a measurement achieving \(F^{\rm cl}=\mathcal I\) would give \(\Tr(\mathcal I^{-1}\mathcal I)=2\) and violate \eqref{sm:eq:GM}, so no single-copy measurement reaches the quantum Fisher matrix in both directions at once, and indeed \(F^{\rm occ}\ne\mathcal I\) for every \(G>0\). Nevertheless, occupation readout does extract the largest total that any such measurement can. Finally, all of this concerns local estimation rather than finite-separation Helstrom discrimination. Occupation snapshots do reconstruct the Helstrom distance between two separated positions through the kernel, but they are not in general the optimal measurement for discriminating those two states.

\section{Reference calculation for Fig.~1}
\label{sm:sec:numerics}

\textit{State and parameters.---} We enumerate all \(56\) occupation configurations exactly, so that the only random operation of concern is multinomial sampling from the reference law, without ever bothering about the Hamiltonian or its diagonalisation. The lattice is the generation-one Sierpi\'nski carpet, namely the eight points \(z_i=x_i+\ii y_i\) with \(x_i,y_i\in\{-1,0,1\}\) and \((x_i,y_i)\ne(0,0)\), and we take the NCS Laughlin parameter \(q=2\), \(N_p=3\) hard-core bosons and one quasihole of strength \(p=1\) \cite{NielsenCiracSierra2012,GlasserEtAl2016}, for which NCS neutrality \(qN_p+p=\eta N_s\) fixes the background charge at \(\eta=7/8\). Up to fixed gauge phases, which cancel from every quantity used here,
\begin{equation}
 \Phi_\mathbf{n}(w)=\delta_{\sum_i n_i,N_p}
 \prod_i(w-z_i)^{p n_i}
 \prod_{i<j}(z_i-z_j)^{q n_i n_j}
 \prod_{k\ne l}(z_k-z_l)^{-\eta n_l},
 \label{sm:eq:ncs}
\end{equation}
with the principal branch taken for the fixed background powers, whose phases likewise cancel from every reconstructed observable. Recall that this gives \(\dim\mathcal H_{\rm mb}=\binom83=56\).

\textit{Metric and finite-sample reconstruction.---} At \(w=0.35-0.20\ii\) we evaluate the same analytic identity in three independent ways, through the covariance \eqref{sm:eq:Q}, the normalization derivative \eqref{sm:eq:Gdef}, and \(-\Omega_{xy}/2\) obtained from centered differences of the normalized state:
\begin{align}
 G_{\rm covariance}&=0.7043156995,\nonumber\\
 G_{\partial\bar\partial\log Z}&=0.7043157000,\nonumber\\
 -\Omega_{xy}/2&=0.7043156936.
\label{sm:eq:metric-check}
\end{align}
The first value is exact, while the latter two are independent finite-difference evaluations of \(\partial_w\partial_{\bar w}\log Z\) and \(-\Omega_{xy}/2\), so that their agreement to better than \(6\times10^{-9}\) checks the covariance formula and the sign of the curvature. The background of Fig.~1(a) is \(G\) itself, evaluated from \eqref{sm:eq:Q} on a square mesh of spacing \(0.05\) offset by half a step so that no sample falls on a site, the mesh being a plotting device that leaves the exact covariance definition untouched. Since \(pN_p=3\), the normalization \(Z\) has bidegree at most \((3,3)\) in \((w,\bar w)\), and a monomial \(w^a\bar w^b\) contributes angular dependence unless \(a=b\). The quarter-turn symmetry of the lattice and its NCS weights requires \(a-b\equiv0\pmod4\), which for \(0\le a,b\le3\) leaves only \(a=b\), so that \(Z(w)=a_0+a_1|w|^2+a_2|w|^4+a_3|w|^6\) and \(G\) is exactly radial for this benchmark. Writing \(s=|w|^2\) and using \(\partial_ws=\bar w\) gives \(G=\partial_s\log Z+s\,\partial_s^2\log Z\), whose value at the origin is \(a_1/a_0\) and whose slope there is \(\partial_sG|_{s=0}=4a_2/a_0-2(a_1/a_0)^2>0\). At large \(|w|\) the leading term \(pN_p/w\) of \(X_\mathbf{n}\) is the same for every configuration at fixed \(N_p\) and drops out of the variance, so the first surviving contribution is \(O(|w|^{-2})\) and \(G=O(|w|^{-4})\). Since \(G\) therefore increases away from the origin but decays at large distance, it develops a finite-radius maximum, here at \(|w|\simeq0.907\) with \(G\simeq1.302\).

The ray chosen for Fig.~1(b), \(w(t)=w_0+te^{0.7\ii}\) with \(w_0=0.6-0.3\ii\) and \(t\in[0,2.4]\), crosses this annular maximum near \(t\simeq0.40\), where \(|w|\simeq0.907\) and \(G\simeq1.302\), so Fig.~1(b) is a one-dimensional cut through the radial profile displayed in Fig.~1(a). Taking \(M=4\times10^3\) along it, the plotted points are \eqref{sm:eq:reweightG} evaluated on one such ensemble, and the error bars are the standard deviation over \(160\) independently drawn ensembles. Note that these repetitions only characterize the finite-shot scatter and bias, and are not part of the protocol itself, for which the single prepared reference suffices. We find the bias of the \(160\)-ensemble mean to remain below \(0.5\%\).  Along the ray \(\mathsf D_2(P_w\Vert P_0)\) increases, reducing the effective sample fraction from \(M_{\rm eff}/M=0.748\) at the annular crossing, where \(\mathsf D_2=0.2907\), to \(M_{\rm eff}/M\simeq0.50\) at \(t=2.4\), where \(\mathsf D_2=0.6929\).

\textit{Phase test and projective rank.---} The red triangle in Fig.~1(a) marks the three target positions used for the finite-separation Bargmann-phase test. From each reference ensemble we reconstruct the three overlaps joining its vertices,
\begin{equation}
 \zeta_j=0.55\exp\!\left[\ii\left(0.25+\tfrac{2\pi(j-1)}{3}\right)\right],
 \qquad j=1,2,3,
 \label{sm:eq:triangle}
\end{equation}
all of which are distinct from the prepared reference \(w_0\). The exact phase of \(\mathcal B_{123}\) is \(0.342370\) rad against the reconstructed \(0.3424\pm0.0104\) rad, quoted as the mean and standard deviation over the \(160\) ensembles, which verifies that a gauge-invariant complex phase is recovered from the reference distribution alone. Finally, Fig.~1(c) tests the exact projective compression. We evaluate \(\ket{\Phi(w)}\) at the \(40\) equally spaced positions on \(|w|=1.5\) that form the columns of \(S\) in \eqref{sm:eq:Smatrix}, whose normalized singular values are \(\{1,0.6622864,0.5232412,0.4879987\}\) with the remaining \(36\) lying at the \(10^{-16}\) numerical floor, so that \(\operatorname{rank}S=4=d+1\). Sampling \(40\) positions rather than four provides an overdetermined check of the analytic rank bound of Theorem~\ref{sm:thm:compress}, independent of the overall normalization of \(\ket{\Phi}\). It is worth noting that \(S\) is a different object from the kernel matrix \(K^{(\xi)}\) of \eqref{sm:eq:Kxi}.

The accompanying Mathematica notebook \cite{NotebookArchive} reproduces the benchmark and all numerical results quoted above.

\end{document}